\newtheorem{theorem}{Theorem}
\newtheorem{remark}{Remark}
\newtheorem{assumption}{Assumption}
\newcommand{\skipval}{0.901mm} 
\long\def\comment#1{}
\DeclareMathOperator*{\argmin}{arg\,min}
\newfont{\bbb}{msbm10 scaled 700}
\newfont{\bb}{msbm10 scaled 1100}
\newcommand{\dv}{{\bf d}}
\newcommand{\nv}{{\bf n}}
\newcommand{\xv}{{\bf x}}
\newcommand{\yv}{{\bf y}}
\newcommand{\Dm}{{\bf D}}
\newcommand{\Hm}{{\bf H}}
\newcommand{\Id}{{\bf I}}
\newcommand{\Qm}{{\bf Q}}
\newcommand{\Tm}{{\bf T}}
\newcommand{\Um}{{\bf U}}
\newcommand{\Wm}{{\bf W}}
\newcommand{\epsilonv}{\hbox{\boldmath$\epsilon$}}
\newcommand{\phiv}{\hbox{\boldmath$\phi$}}
\newcommand{\psiv}{\hbox{\boldmath$\psi$}}
\newcommand{\sigmav}{\hbox{\boldmath$\sigma$}}
\newcommand{\Psim}{\hbox{\boldmath$\Psi$}}
\newcommand{\Xim}{\hbox{\boldmath$\Xi$}}
\begin{document}
%

\title{SNR Estimation in Linear Systems with \\ Gaussian Matrices}
%
%
%

%

\author{Mohamed~A.~Suliman, Ayed~M.~Alrashdi, Tarig~Ballal, and Tareq~Y.~Al-Naffouri

\thanks{
\textcopyright 2017 IEEE. Personal use of this material is permitted. Permission from IEEE must be obtained for all other uses, in any current or future media, including reprinting/republishing this material for advertising or promotional purposes, creating new collective works, for resale or redistribution to servers or lists, or reuse of any copyrighted component of this work in other works.

The authors are with the Computer, Electrical, and Mathematical Sciences and Engineering (CEMSE) Division, King Abdullah University of Science and Technology (KAUST), Thuwal, Saudi Arabia. Emails:$\{$mohamed.suliman, ayed.alrashdi, tarig.ahmed, tareq.alnaffouri$\}$@kaust.edu.sa.

A. M. Alrashdi is also with the Department of Electrical Engineering, University of Hail, Saudi Arabia.
}
}

\maketitle

\begin{abstract}
This paper proposes a highly accurate algorithm to estimate the signal-to-noise ratio (SNR) for a linear system from a single realization of the received signal. We assume that the linear system has a Gaussian matrix with one sided left correlation. The unknown entries of the signal and the noise are assumed to be independent and identically distributed with zero mean and can be drawn from any distribution. We use the ridge regression function of this linear model in company with tools and techniques adapted from random matrix theory to achieve, in closed form, accurate estimation of the SNR without prior statistical knowledge on the signal or the noise. Simulation results are provided, and show that the proposed method is very accurate.
\end{abstract}
\begin{IEEEkeywords}
SNR estimation, ridge regression, random matrix theory.
\end{IEEEkeywords}

\IEEEpeerreviewmaketitle
\section{Introduction}
\label{sec:intro}
The need for an accurate estimate of the signal-to-noise ratio (SNR) has attracted a lot of interest in many fields of signal processing and communication such as signal detection and estimation \cite{kailath2000linear, poor2013introduction, 7585088}, machine learning \cite{evgeniou2000regularization, sebald2000support, shalev2014understanding}, image restoration \cite{katsaggelos1991regularized, pesquet2009sure}, cognitive radio \cite{mitola1999cognitive}, MIMO and massive MIMO systems \cite{das2012snr, lu2014overview, 6951471}. This is motivated by the fact that many algorithms in these fields require accurate SNR knowledge to achieve optimal performance.

Over the years, the problem of SNR estimation has been studied in different contexts depending on the application. Historically, it was first inspected in the context of single-input single-output (SISO) systems with constant channels \cite{benedict1967joint}. 

In \cite{das2012snr}, an SNR estimation approach for MIMO systems is proposed based on the known pilot symbols and the unknown data symbols. Both the channel and the noise are assumed to be Gaussian. In addition, the transmitted symbols are assumed to be coming from a known modulation. The authors in \cite{matzner1994snr} use higher-order moments to estimate the SNR assuming that the shape of the narrowband signal and the noise power spectral density are known. In \cite{pauluzzi2000comparison}, several SNR estimation techniques have been compared for AWGN channel, while in \cite{vartiainen2006blind} a blind SNR estimation technique for narrowband signals corrupted by complex Gaussian noise is proposed.

In this paper, we consider the problem of estimating the SNR from a vector $\yv \in \mathbb{R}^{M}$ of noisy linear observations that are obtained through
\begin{equation}
\label{eq:model}
\yv =  \Wm \xv_{\text{o}} +\nv,
\end{equation}
where $\xv_{\text{o}} \in \mathbb{R}^{K}$ is an unknown transmitted signal and $\nv \in \mathbb{R}^{M}$ is an unknown noise vector. A key difference between the proposed estimator and the aforementioned estimators is that it does not impose any assumptions on the distribution functions of $\xv_{\text{o}}$ and $\nv$ as well as their spectra.

In this work, we assume that the entries of $\xv_{\text{o}}$, and those of the noise $\nv$, are drawn from any two distributions and that there elements are independent and identically distributed (i.i.d.) with zero mean and unknown variances. The matrix $\Wm$ is an $M\times K $ matrix that can be decomposed as
\begin{equation}
\label{eq:channel matrix}
\Wm = \Psim^{\frac{1}{2}} \overline{\Wm},
\end{equation}
where $\Psim$ is an $M \times M$ known Hermitian nonnegative left correlation matrix, while $\overline{\Wm} \in \mathbb{R}^{M \times K}$ is a known Gaussian matrix with centered unit variance (standard) i.i.d. entries. 

The optimal signal estimate of the model in (\ref{eq:model}) is obtained through the linear minimum mean-squared error (LMMSE) estimator which is given by \cite{poor2013introduction}
\begin{equation}
\label{eq:lmmse simple}
\hat{\xv} =\left(\Wm^{T}\Wm + \lambda_{\text{o}}\Id_{\text{K}}\right)^{-1}\Wm^{T}\yv,
\end{equation}
where the regularization parameter $\lambda_{\text{o}} =\frac{\sigma_{\nv}^{2}}{\sigma_{\xv_{\text{o}}}^{2}}=\frac{1}{\text{SNR}}$. The expression in (\ref{eq:lmmse simple}) is the regularized least squares (RLS) estimation of $\xv_{\text{o}}$ and it is the minimizer of the regression function
\begin{equation}
\label{eq:tik-minimization}
\hat{\xv} := \underset{\xv}{\operatorname{\argmin}} \big\{ ||\yv - \Wm \xv ||_2^{2} + \lambda \  ||\xv||_{2}^{2}\big\}|_{\lambda = \lambda_{\text{o}}}.
\end{equation}
In this paper, we use the cost function in (\ref{eq:tik-minimization}) to estimate the SNR of the linear system in (\ref{eq:model}) by applying tools from random matrix theory (RMT). We prove that the objective function in (\ref{eq:tik-minimization}), evaluated at its optimal solution $\hat{\xv}$, concentrates at a value that is a function of the system dimensions, the correlation matrix, the regularization parameter, and the signal and the noise variances. This function turns out to be linear in $\sigma_{\xv_{\text{o}}}^{2}$ and $\sigma_{\nv}^{2}$. Thus, by evaluating this function at multiple values of $\lambda$, we can solve for $\sigma_{\xv_{\text{o}}}^{2}$ and $\sigma_{\nv}^{2}$.

The paper is organized as follows. In Section~\ref{sec:main}, we present the main theorem for estimating the SNR of (\ref{eq:model}) and illustrate how it can be applied. Section~\ref{sec:proof} presents the proof of this theorem while Section~\ref{sec:results} illustrates the performance of the proposed SNR estimation algorithm using simulations.

\emph{Notations}-Boldface lower-case symbols are used for column vectors, $\dv$, and upper-case for matrices, $\Dm$. The notation $[\dv]_{i}$ denotes the $i$-th element of $\dv$, while $[\Dm]_{i,j}$ indicates the element in $i$-th row and $j$-th column of $\Dm$. $\Dm^{T}$ and $\text{Tr}\left(\Dm\right)$ denote the transpose and the trace of $\Dm$, respectively.  Moreover, $\Id_{\text{M}}$ denotes the $M \times M$ identity matrix. The expectation operator is denoted by $\mathbb{E} [\cdot]$ while $||\cdot||_{2}$ denotes the spectral norm for matrices and Euclidean norm for vectors. Finally, diag $\left(\dv\right)$ is the diagonal matrix that have the elements of $\dv$ at its diagonal entries, while $``\xrightarrow{\text{a.s.}}"$ designates almost sure convergence.
\section{The SNR Estimation Algorithm}
\label{sec:main}
In this section, we present the main theorem of the paper and we show how it can be applied to estimate the SNR of (\ref{eq:model}). We start by stating the main assumptions of the theorem.
\begin{assumption}
\label{A1}
\normalfont
We consider the linear asymptotic regime in which both $M$ and $K$ go to $+\infty$ with their ratio being bounded below and above as follows:
\begin{equation}
\label{eq:ratio}
0 < \rho^{-} =  \lim \inf \frac{K}{M} \leq \rho^{+} = \lim \sup \frac{K}{M} < +\infty. \nonumber
\end{equation}
\begin{assumption}
\label{A2}
\normalfont
Let $\overline{\Wm} \in \mathbb{R}^{M\times K}$ have i.i.d. entries with $ [ \overline{\Wm} ]_{i,j} \sim \mathcal{N}\left(0, 1\right)$ and define the spectral decomposition of $\Psim$ as $\Psim = \Um \Qm \Um^{T}$. Furthermore, we assume that there exists a real number $q_{\text{max}}< \infty$ such that 
\begin{equation}
\label{eq:d bound}
\sup_{K} || \Qm || \leq q_{\text{max}}, \nonumber
\end{equation}
and that the normalized trace of $\Qm$ satisfies
\begin{equation}
\label{eq:norm trace}
\inf_{K} \frac{1}{K} \text{Tr}\left(\Qm\right) >0. \nonumber
\end{equation}
\end{assumption}
\end{assumption}
\begin{assumption}
\label{A3}
\normalfont
We assume that the entries of $\xv_{\text{o}}$ are sampled i.i.d. from some distribution function (not necessarily known) with zero mean and unknown variance $\sigma_{\xv_{\text{o}}}^{2}$ and that the noise vector $\nv$ also has i.i.d. entries that are sampled from some density function of zero mean and unknown variance $\sigma_{\nv}^{2}$.
\end{assumption}
\begin{remark}\normalfont(On the Assumptions)
\begin{itemize}
\item Although Assumption~\ref{A1} is a key factor in the derivation of the proposed SNR estimation algorithm, we will show later by simulations that the proposed technique has good accuracy even for relatively small system dimensions.
\item The derivation of the proposed algorithm relies on the fact that $\overline{\Wm}$ is a Gaussian matrix.
\item Assumptions~\ref{A1} and \ref{A2} are technical assumptions that are often satisfied in practice. Thus, they do not limit the applicability of the proposed SNR estimation algorithm.
\item Based on Assumption~\ref{A3}, the SNR of (\ref{eq:model}) is $\frac{{\sigma}_{\xv_{\text{o}}}^{2}}{{\sigma}_{\nv}^{2}}$.
\item We confine ourselves to the set of real numbers $\mathbb{R}$, but we point out that the theorem and the results in the paper are directly applicable to the complex set $\mathbb{C}$.
\end{itemize}
\end{remark}
\begin{theorem}
\label{th2}
Under the settings of Assumptions~\ref{A1}, \ref{A2}, and \ref{A3}, and by considering a normalized version of the cost function in (\ref{eq:tik-minimization}) evaluated at its optimal solution $\hat{\xv} \left(\lambda\right)$, i.e.,
\begin{equation}
\label{eq:Expectation of Cost function}
\Phi \left(\overline{\Wm}\right) =  \frac{1}{K} ||\yv - \Psim^{\frac{1}{2}}\overline{\Wm} \hat{\xv} \left(\lambda\right) ||_{2}^{2}+ \frac{\lambda}{K} \ ||\hat{\xv}\left(\lambda\right)||_{2}^{2},
\end{equation}
then, there exists a deterministic function $\alpha\left(t\right)$ defined as
\begin{align}
\label{eq:Main result}
& \alpha\left(t\right) =  
\frac{\text{Tr}\left(\Psim \Tm \left(t\right)\right)}{\left(1+t \delta\left(t\right)\right)}  \sigma_{\xv_{\text{o}}}^{2} +\left( \frac{M}{K} - \frac{ t \ \text{Tr}\left(\Psim \Tm \left(t\right)\right)}{K \left(1+t\delta\left(t\right)\right)} \right) \sigma_{\nv}^{2} \nonumber\\& + \mathcal{O}\left(K^{-1}\right),
\end{align}
such that 
\begin{equation}
\label{eq:almost sure}
\mathop{\mathbb{E}}_{\xv_{\text{o}},\nv}[\Phi \left(\overline{\Wm}\right)] - \alpha\left(t\right) \xrightarrow[]{\text{a.s.}} 0,
\end{equation} 
where $t =\frac{K}{\lambda}$ and the notation $d = \mathcal{O}\left(K^{-1}\right)$ means that $| \frac{d}{K^{-1}}| $ is bounded as $K \to \infty$. The matrix $\Tm\left(t\right)$ is an $M \times M$ matrix that is defined as
\begin{equation}
\label{eq:T maitrx}
\Tm \left(t \right) =  \Um \left( \Id_{\text{M}}  +  \frac{t}{\left(1+t \delta\left(t\right)\right)} \Qm \right)^{-1} \Um^{T},
\end{equation}
where $\delta \left(t\right)$ is the unique positive solution of the following fixed-point equation:
\begin{equation}
\label{eq:delta}
\delta \left(t\right) = \frac{1}{K} \text{Tr}\left( \Qm \left(\Id_{\text{M}} +\frac{t}{1+t \ \delta\left(t\right)} \Qm\right)^{-1} \right).
\end{equation}
\end{theorem}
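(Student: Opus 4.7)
The plan is to collapse the two-term cost $\Phi(\overline{\Wm})$ into a single quadratic form in $\yv$ using the closed-form RLS optimum and standard resolvent identities, then take the conditional expectation over $\xv_{\text{o}}$ and $\nv$ to reduce everything to the trace of the resolvent $(\Wm\Wm^{T}+\lambda\Id_{\text{M}})^{-1}$, and finally invoke a deterministic-equivalent result from RMT to recognise that trace as the quantity built from $\Tm(t)$ and $\delta(t)$ in the statement. Concretely, the minimiser of (\ref{eq:tik-minimization}) is $\hat{\xv}(\lambda)=(\Wm^{T}\Wm+\lambda\Id_{\text{K}})^{-1}\Wm^{T}\yv$; writing $\Bm=\Wm\Wm^{T}+\lambda\Id_{\text{M}}$ and using the push-through identity $\Wm(\Wm^{T}\Wm+\lambda\Id_{\text{K}})^{-1}=\Bm^{-1}\Wm$, a short computation gives $\yv-\Wm\hat{\xv}=\lambda\Bm^{-1}\yv$ and $||\hat{\xv}||_{2}^{2}=\yv^{T}(\Bm^{-1}-\lambda\Bm^{-2})\yv$, so the residual and penalty in (\ref{eq:Expectation of Cost function}) telescope into
\begin{equation}
\label{eq:proposal-phi}
\Phi(\overline{\Wm}) = \frac{\lambda}{K}\,\yv^{T}\Bm^{-1}\yv.
\end{equation}

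Substituting $\yv=\Wm\xv_{\text{o}}+\nv$ into (\ref{eq:proposal-phi}) and using the zero-mean, independent, i.i.d.\ structure of Assumption~\ref{A3} yields
\begin{equation}
\label{eq:proposal-expectation}
\mathop{\mathbb{E}}_{\xv_{\text{o}},\nv}[\Phi] = \frac{\lambda\sigma_{\xv_{\text{o}}}^{2}}{K}\,\text{Tr}\left(\Wm^{T}\Bm^{-1}\Wm\right) + \frac{\lambda\sigma_{\nv}^{2}}{K}\,\text{Tr}(\Bm^{-1}).
\end{equation}
The identity $\Wm^{T}\Bm^{-1}\Wm=\Id_{\text{K}}-\lambda(\Wm^{T}\Wm+\lambda\Id_{\text{K}})^{-1}$ gives $\text{Tr}(\Wm^{T}\Bm^{-1}\Wm)=M-\lambda\text{Tr}(\Bm^{-1})$, so the whole randomness of $\overline{\Wm}$ is compressed into the single scalar $\lambda\text{Tr}(\Bm^{-1})$. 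Because $\Wm=\Psim^{\frac{1}{2}}\overline{\Wm}$ with $\overline{\Wm}$ standard Gaussian, the classical one-sided correlated deterministic-equivalent theorem (of Silverstein / Hachem--Loubaton--Najim type), which is precisely where Assumptions~\ref{A1} and \ref{A2} enter, delivers
\begin{equation}
\label{eq:proposal-det-equiv}
\lambda\,\text{Tr}(\Bm^{-1}) - \text{Tr}(\Tm(t)) \xrightarrow{\text{a.s.}} 0, \qquad t=\frac{K}{\lambda},
\end{equation}
with $\Tm(t)$ and $\delta(t)$ as in (\ref{eq:T maitrx})--(\ref{eq:delta}). Taking the trace of $\Tm(t)+\frac{t}{1+t\delta(t)}\Psim\Tm(t)=\Id_{\text{M}}$, which is only the definition of $\Tm(t)$, yields the purely algebraic fact
\begin{equation}
\label{eq:proposal-algebra}
\text{Tr}(\Tm(t))=M-\frac{t\,\text{Tr}(\Psim\Tm(t))}{1+t\delta(t)}.
\end{equation}
Substituting (\ref{eq:proposal-det-equiv}) and (\ref{eq:proposal-algebra}) into (\ref{eq:proposal-expectation}) and using $\lambda/K=1/t$ collects the coefficients of $\sigma_{\xv_{\text{o}}}^{2}$ and $\sigma_{\nv}^{2}$ into exactly the expression $\alpha(t)$ of (\ref{eq:Main result}).

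The main obstacle is not the qualitative $\xrightarrow{\text{a.s.}}0$ statement, which follows directly from the classical deterministic equivalent, but the explicit $\mathcal{O}(K^{-1})$ rate. I would obtain this through the sharper non-asymptotic form of (\ref{eq:proposal-det-equiv}), which requires a Poincar\'e--Nash variance inequality for smooth functionals of Gaussian matrices to control the fluctuations of $\frac{1}{K}\text{Tr}(\Bm^{-1})$ around its expectation at rate $K^{-2}$, together with a Gaussian integration-by-parts and perturbation argument applied to the fixed-point equation (\ref{eq:delta}) to show that $\mathbb{E}[\frac{1}{K}\text{Tr}(\Bm^{-1})]$ and $\frac{1}{K\lambda}\text{Tr}(\Tm(t))$ differ by $\mathcal{O}(K^{-2})$. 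The uniform spectral bound $\sup_{K}||\Qm||\leq q_{\text{max}}$ and the strict positivity $\inf_{K}K^{-1}\text{Tr}(\Qm)>0$ from Assumption~\ref{A2} are precisely what is needed to keep $1+t\delta(t)$ and the norm of $\Tm(t)$ bounded away from zero and infinity uniformly in $K$, so that the remainder stays genuinely $\mathcal{O}(K^{-1})$ along the sequence.
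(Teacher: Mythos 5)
Your argument is correct and rests on the same two pillars as the paper's proof --- reduction of $\mathop{\mathbb{E}}_{\xv_{\text{o}},\nv}[\Phi]$ to a single resolvent trace, followed by a Gaussian deterministic equivalent with an $\mathcal{O}(K^{-2})$ variance bound and Borel--Cantelli --- but your front end is a genuinely different and much shorter decomposition. The paper substitutes $\hat{\xv}$ into the cost, expands it into four separate traces in (\ref{eq: Cost function at estimate})--(\ref{eq:eq:de of first term 2}), and recombines them over several displays before reaching (\ref{eq: final expression cost function}). Your telescoping identity $\Phi=\frac{\lambda}{K}\yv^{T}\Bm^{-1}\yv$ with $\Bm=\Wm\Wm^{T}+\lambda\Id_{\text{M}}$ gets to an algebraically equivalent statement in two lines: since $\bigl(\tfrac{t}{K}\Wm\Wm^{T}+\Id_{\text{M}}\bigr)^{-1}=\lambda\Bm^{-1}$ and $\text{Tr}\bigl(\Bm^{-1}\Wm\Wm^{T}\bigr)=M-\lambda\,\text{Tr}\bigl(\Bm^{-1}\bigr)$, your expression
\begin{equation*}
\mathop{\mathbb{E}}_{\xv_{\text{o}},\nv}[\Phi]=\frac{\lambda\sigma_{\xv_{\text{o}}}^{2}}{K}\,\text{Tr}\bigl(\Wm^{T}\Bm^{-1}\Wm\bigr)+\frac{\lambda\sigma_{\nv}^{2}}{K}\,\text{Tr}\bigl(\Bm^{-1}\bigr)
\end{equation*}
is exactly (\ref{eq: final expression cost function}), and your identity $\text{Tr}(\Tm(t))=M-\frac{t\,\text{Tr}(\Psim\Tm(t))}{1+t\delta(t)}$ shows that your deterministic equivalent for $\lambda\,\text{Tr}(\Bm^{-1})$ is the same statement the paper imports as (\ref{eq:DE term 2 term1}), merely written in terms of $\text{Tr}(\Tm)$ rather than $\text{Tr}(\Psim\Tm)/(1+t\delta)$. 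The back end is identical in substance: the paper's (\ref{eq: variance of the term}) is precisely your Poincar\'e--Nash variance bound, and its bias control is your Gaussian integration-by-parts step. One caution: your display asserting $\lambda\,\text{Tr}(\Bm^{-1})-\text{Tr}(\Tm(t))\xrightarrow{\text{a.s.}}0$ is phrased for the \emph{unnormalized} trace, whose fluctuation about its mean is $\mathcal{O}(1)$ (this is the CLT regime for linear spectral statistics), so as written it claims more than the deterministic-equivalent machinery delivers; the defensible statement --- which your closing paragraph in fact uses --- is an $\mathcal{O}(K^{-1})$ bias bound on the expectation together with almost-sure control of the $K^{-1}$-normalized fluctuation. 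The paper's own passage from (\ref{eq: variance of the term}) to (\ref{eq:almost sure 3}) leans on the same normalization, so the two proofs stand or fall together on that point.
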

\begin{proof}
The proof of Thereom~\ref{th2} is given in Section~\ref{sec:proof}.
\end{proof}
\begin{remark}\normalfont(On Theorem~\ref{th2})
The normalization of the cost function in (\ref{eq:tik-minimization}) by $K$ is for RMT purposes. Also note that both the cost function in (\ref{eq:tik-minimization}) and its normalized version have the same minimizer (i.e., same signal estimate $\hat{\xv} \left(\lambda\right)$).
\end{remark}

In Fig.~\ref{fig:de vs exact all}, we evaluate the result of Theorem~\ref{th2} by providing an example that compares $\mathop{\mathbb{E}}_{\xv_{\text{o}},\nv}[\Phi \left(\overline{\Wm}\right)] $ and (\ref{eq:Main result}). We set $\overline{\Wm} \in \mathbb{R}^{300\times 100}, [\overline{\Wm}]_{i,j} \sim \mathcal{N}(0, 1)$, and we take $\Psim^{\frac{1}{2}} = \text{diag}\left(\psiv\right)$ where $[\psiv]_{i}$ is uniformly distributed in $[0,1]$. Finally, we choose $[\xv_{\text{o}}]_{i}\sim \mathcal{N}(0, 10)$ and $[\nv]_{i}\sim \mathcal{N}(0, 1)$. 

From Fig.~\ref{fig:de vs exact all}, we can observe that the result of Theorem~\ref{th2} is very accurate and that the error is negligible. In fact, the same behavior can be observed for different scenarios of the correlation matrix $\Psim$, the signal $\xv_{\text{o}}$, and the noise $\nv$. 
\begin{figure}[h!]
	 \centering	 	                                                                             
	{\includegraphics[width=2.275in]{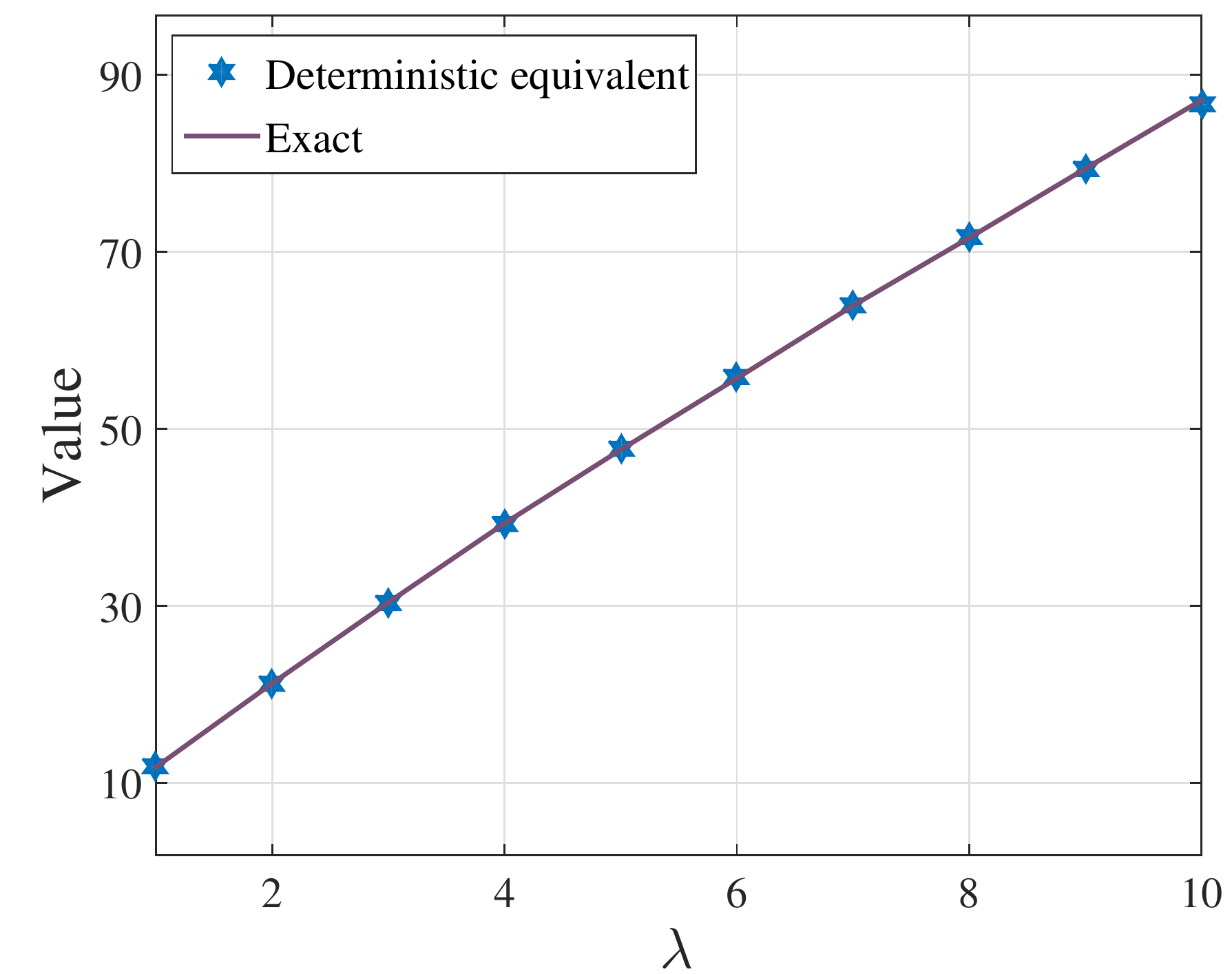}}  
\caption{Comparing $\mathop{\mathbb{E}}_{\xv_{\text{o}},\nv}[\Phi \left(\overline{\Wm}\right)]$ with its deterministic equivalent in (\ref{eq:Main result}).}
	\label{fig:de vs exact all}
\end{figure}

The result in Theorem~\ref{th2} indicates that the average value of the normalized version of the function in (\ref{eq:tik-minimization}) converges to (\ref{eq:Main result}) at $\xv = \hat{\xv}$. Let us see how this can be used to estimate $\sigma_{\xv_{\text{o}}}^{2}$ and $\sigma_{\nv}^{2}$. To this end, set
\begin{equation}
\label{eq:term1}
\xi_{1} \left(\lambda\right) = \frac{\text{Tr}\left(\Psim \Tm \left(t\right)\right)}{\left(1+t \delta\left(t\right)\right)},
\end{equation}
\begin{equation}
\label{eq:term2}
\xi_{2} \left(\lambda\right) = \frac{M}{K} - \frac{ t \ \text{Tr}\left(\Psim \Tm \left(t\right)\right)}{K \left(1+t\delta\left(t\right)\right)},
\end{equation}
and consider the definition of the function $\Phi$ as in (\ref{eq:Expectation of Cost function}). By evaluating (\ref{eq:term1}), (\ref{eq:term2}), and (\ref{eq:Expectation of Cost function}) at different $\lambda$ values, i.e., $\lambda_{i}, i = 1,\dots, n; n\geq 2$, we can generate a system of linear equations in the form
\begin{equation}
\label{eq:matrix formulation 2}
{\begin{bmatrix}  \xi_{1}\left(\lambda_{1}\right)  & \xi_{2}\left(\lambda_{1}\right) \\ \vdots  & \vdots \\ \xi_{1}\left(\lambda_{n}\right)  & \xi_{2}\left(\lambda_{n}\right)  \end{bmatrix}}  \begin{bmatrix} \sigma_{\xv_{\text{o}}}^{2} \\  \sigma_{\nv}^{2} \end{bmatrix} + \begin{bmatrix} \epsilon_{1} \\ \vdots \\ \epsilon_{n} \end{bmatrix} = \begin{bmatrix} \Phi\left(\lambda_{1}\right) \\ \vdots \\ \Phi\left(\lambda_{n}\right) \end{bmatrix}\Rightarrow \Xim \sigmav + \epsilonv = \phiv,
\end{equation}
where $\epsilonv$ is the approximation error vector. Now, by solving the following constrained linear LS problem:
\begin{eqnarray}
\label{eq3:contrained LS} 
 \underset{\sigmav}\min \  \frac{1}{2} \ ||\phiv - \Xim \sigmav  ||_{2}^{2}  \hspace{10pt} \text{subject to} \,\, \sigmav \geq \bm{0},
\end{eqnarray}
we can obtain the values of $\hat{\sigma}_{\xv_{\text{o}}}^{2}$ and $\hat{\sigma}_{\nv}^{2}$.

The error vector $\epsilonv$ in (\ref{eq:matrix formulation 2}) is due to the fact that we are equating the normalized version of the cost function in (\ref{eq:tik-minimization}) with its expected value taken over $\xv_{\text{o}}$ and $\nv$ when they are both evaluated at $\xv =\hat{\xv}$. This will be accurate for high dimensions but becomes less so as we decrease the dimensions of the problem. Moreover, (\ref{eq3:contrained LS}) assumes that the error variables are uncorrelated which is not the case as the values of $\Phi \left(\lambda\right)$ for different $\lambda$ are correlated. Thus, $\Xim \sigmav$ in (\ref{eq3:contrained LS}) should be weighted by the inverse of the unknown covariance matrix of $\epsilonv$. Our main goal here is to find a pair $\left(\hat{\sigma}_{\xv_{\text{o}}}^{2},\hat{\sigma}_{\nv}^{2}\right)$ that closely approximates the signal and the noise statistics for all (or almost all) possible realizations of the random quantities in (\ref{eq:model}). 

The process of estimating the SNR of (\ref{eq:model}) is summarized in Algorithm~\ref{al:algorithm rmt}.
\footnote{The MATLAB code of the proposed SNR estimation approach is provided at http://faculty.kfupm.edu.sa/ee/naffouri/publications.html}
\begin{algorithm}
    \SetKwInOut{Input}{Input}
    \SetKwInOut{Output}{Output}

    \Input{$\yv, \overline{\Wm}, \Psim=\Um \Qm \Um^{T},[\lambda_{1},\dots, \lambda_{n}];  n\geq 2$.}
    \Output{SNR.}
     \For{$i=1,2,\dots n$}
     {
     $t_{i} = \frac{K}{\lambda_{i}}$\;
     Compute $\delta\left(t_{i}\right)$ from (\ref{eq:delta}) and $\Tm\left(t_{i}\right)$ using (\ref{eq:T maitrx})\;
      $\hat{\xv}_{i} =\left(\Wm^{T}\Wm+\lambda_{i}\Id_{\text{K}}\right)^{-1}\Wm^{T}\yv  $\;
      Obtain $\xi_{1} \left(\lambda_{i}\right), \xi_{2} \left(\lambda_{i}\right)$, and $\Phi\left(\lambda_{i}\right)$ using (\ref{eq:term1}), (\ref{eq:term2}), and (\ref{eq:Expectation of Cost function}), respectively.
     
     }
     Fromulate the matrix $\Xim$ and the vector $\phiv$ as in (\ref{eq:matrix formulation 2})\;
     Solve (\ref{eq3:contrained LS}) to obtain $\hat{\sigma}_{\xv_{\text{o}}}^{2}$ and $\hat{\sigma}_{\nv}^{2}$\;
     SNR = $\frac{\hat{\sigma}_{\xv_{\text{o}}}^{2}}{\hat{\sigma}_{\nv_{\text{o}}}^{2}}$.
    \caption{Algorithm to estimate the SNR}
    \label{al:algorithm rmt}
\end{algorithm}
\section{Proof Outline}
\label{sec:proof}
In this section, we provide the proof of Theorem~\ref{th2}. We start by evaluating the function in (\ref{eq:tik-minimization}) at its optimal solution in (\ref{eq:lmmse simple}) for a general $\lambda$. Substituting (\ref{eq:lmmse simple}) in (\ref{eq:tik-minimization}) and manipulating, we obtain
\begin{align}
\label{eq: Cost function at estimate}
&||\yv -  \Psim^{\frac{1}{2}} \overline{\Wm} \hat{\xv}||_{2}^{2}+ \lambda \ ||\hat{\xv}||_{2}^{2}= \yv^{T}\yv + \yv^{T}\Wm \Hm\Wm^{T} \Wm \Hm\Wm^{T}\yv \nonumber\\
&- 2 \yv^{T}\Wm \Hm \Wm^{T}\yv + \lambda \yv^{T}  \Wm \Hm^{2} \Wm^{T}  \yv,
\end{align}
where $\Hm = \left(\Wm^{T}\Wm +\lambda\Id_{\text{K}}\right)^{-1}$.

Now, we consider taking the expected value of (\ref{eq:Expectation of Cost function}) over all its random variables. Based on Assumption~\ref{A3}, we can express the expected value of the first term in (\ref{eq:Expectation of Cost function}) using (\ref{eq:model}) and (\ref{eq: Cost function at estimate}) as
\begin{align}
\label{eq:de of first term}  
&\mathop{\mathbb{E}}_{\overline{\Wm}, \xv_{\text{o}}, \nv} \big[ \frac{1}{K} ||\yv -  \Psim^{\frac{1}{2}} \overline{\Wm} \hat{\xv}||_{2}^{2} \big]= \frac{\sigma_{\xv_{\text{o}}}^{2}}{K} \ \mathop{\mathbb{E}}_{\overline{\Wm}} \Big[ \text{Tr} \left( \Wm^{T}  \Wm\right) \nonumber\\&- 2 \ \text{Tr} \left(\Wm^{T} \Wm \Hm \Wm^{T} \Wm\right)+  \text{Tr} \left(\Wm^{T} \Wm \Hm \Wm^{T}  \Wm \Hm \Wm^{T}  \Wm \right)\Big] \nonumber\\&+  \frac{\sigma_{\nv}^{2}}{K} \ \mathop{\mathbb{E}}_{\overline{\Wm}} \Big[ M + \text{Tr} \left(\Wm^{T} \Wm \Hm\Wm^{T} \Wm \Hm \right) - 2 \text{Tr} \left( \Wm^{T} \Wm \Hm\right) \Big].
\end{align}
\begin{figure*}[ht!]  
	 \centering	 	                                                                             
	\subfigure[var(ML)=0.2358, var(proposed)  = 0.1348.]{\label{fig:out1}\includegraphics[width=2.352in, height=1.7in]{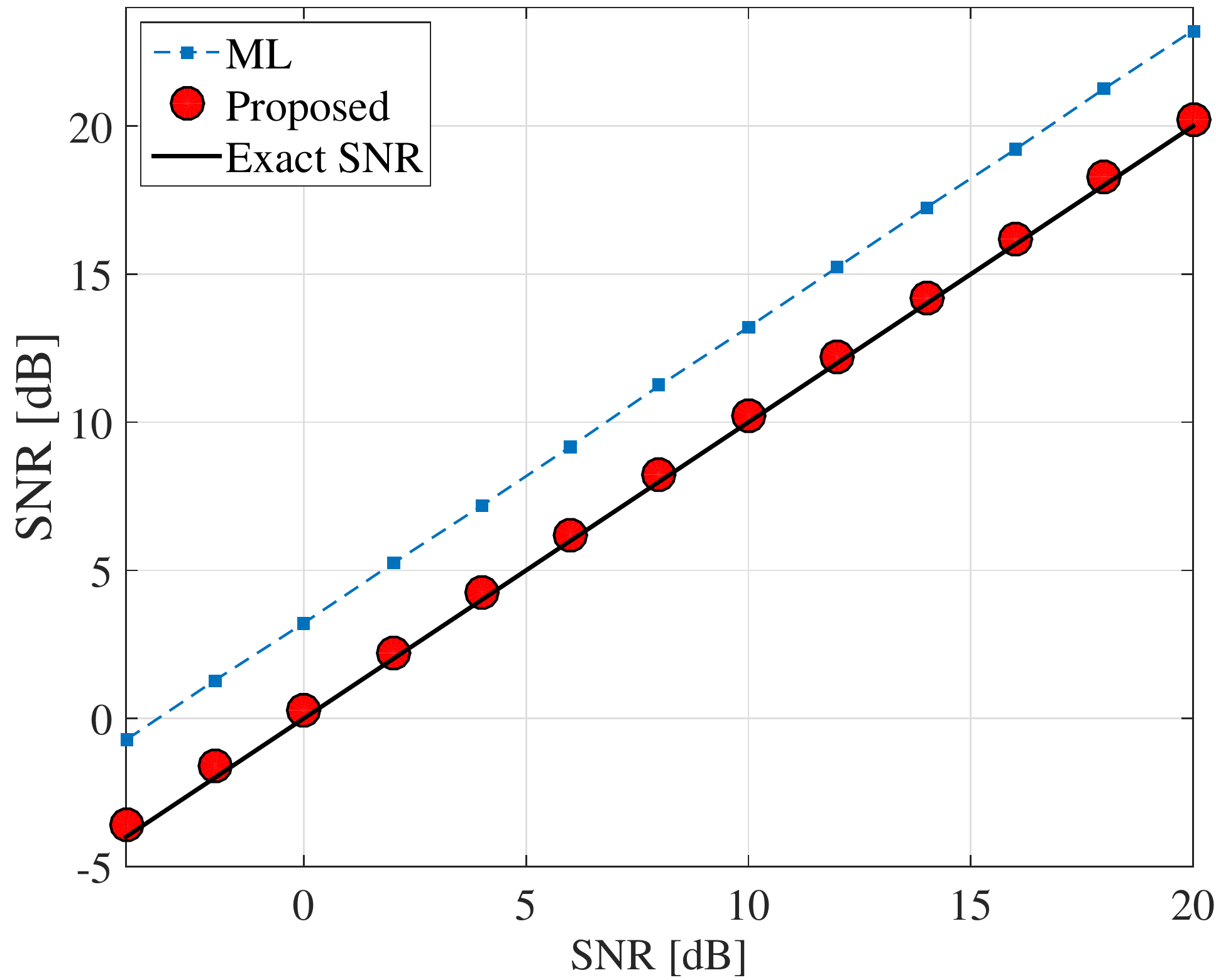}}
	\subfigure[var(ML) =0.1721, var(proposed) =0.1001.]{\label{fig:out2}\includegraphics[width=2.351in, height=1.7in]{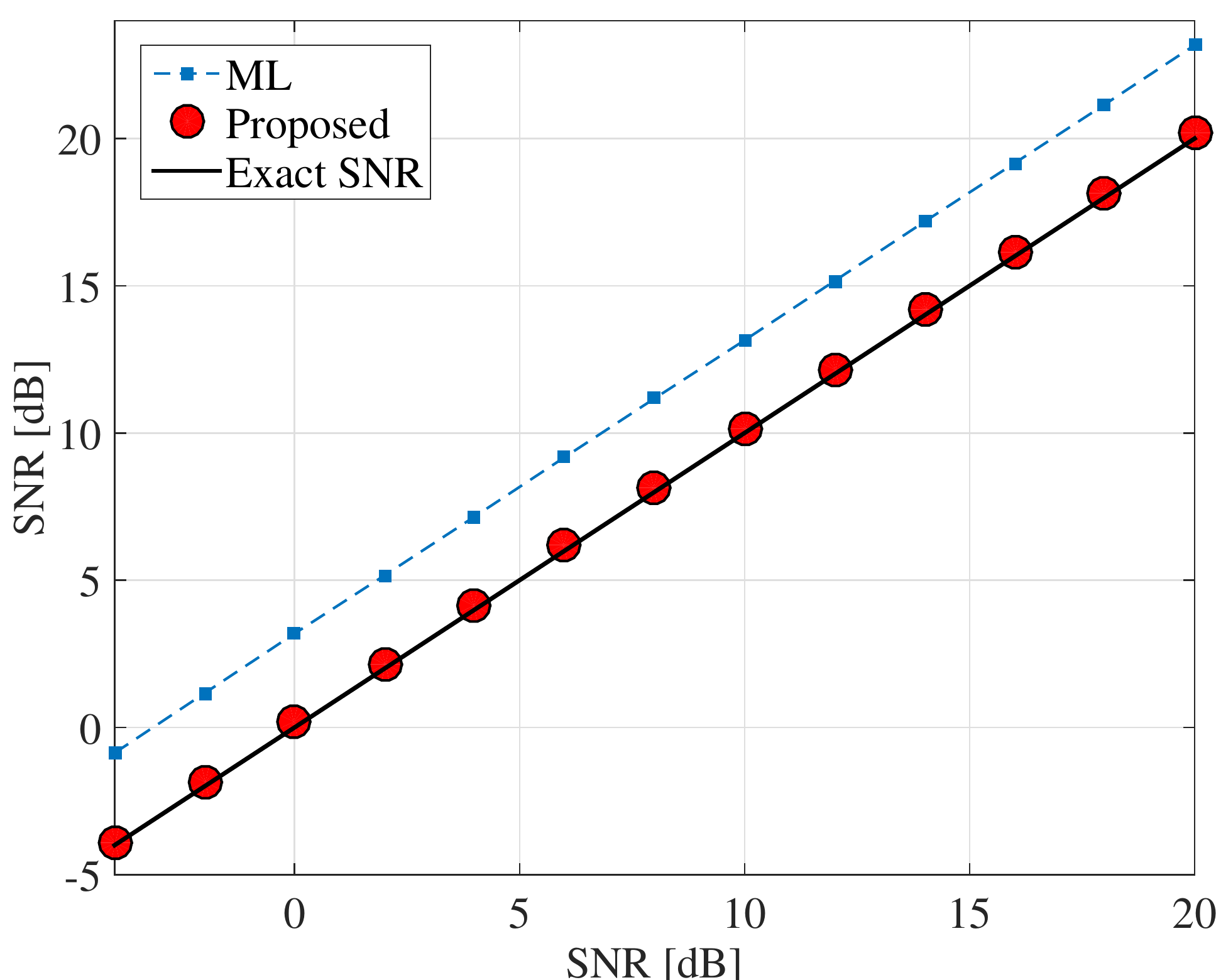}}
	\subfigure[var(ML) = 0.2482, var(proposed)=0.0908.]{\label{fig:out3}\includegraphics[width=2.351in, height=1.7in]{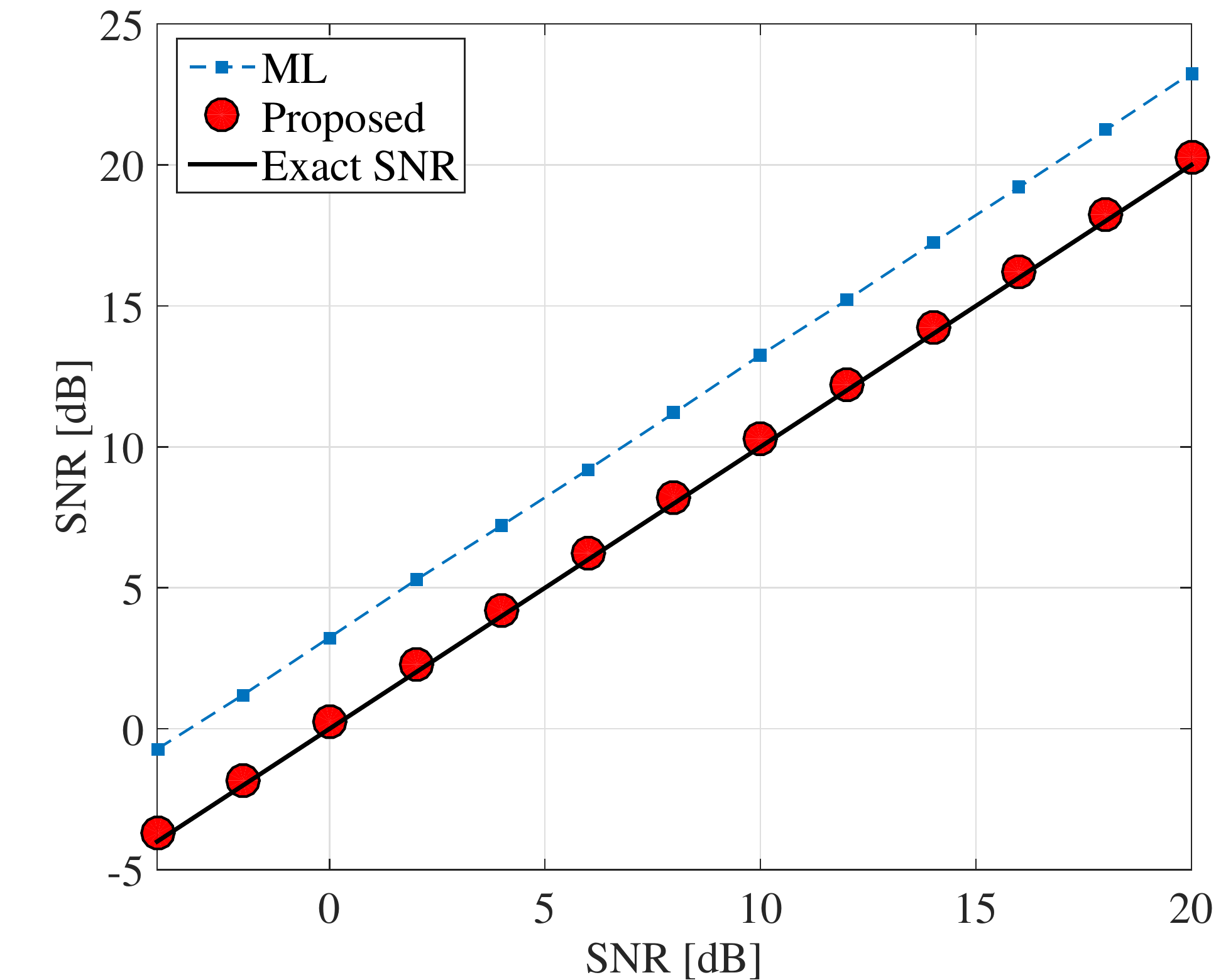}}

	\subfigure[ var(ML) = 0.2426, var(proposed)=  0.0656.]{\label{fig:out4}\includegraphics[width=2.351in, height=1.7in]{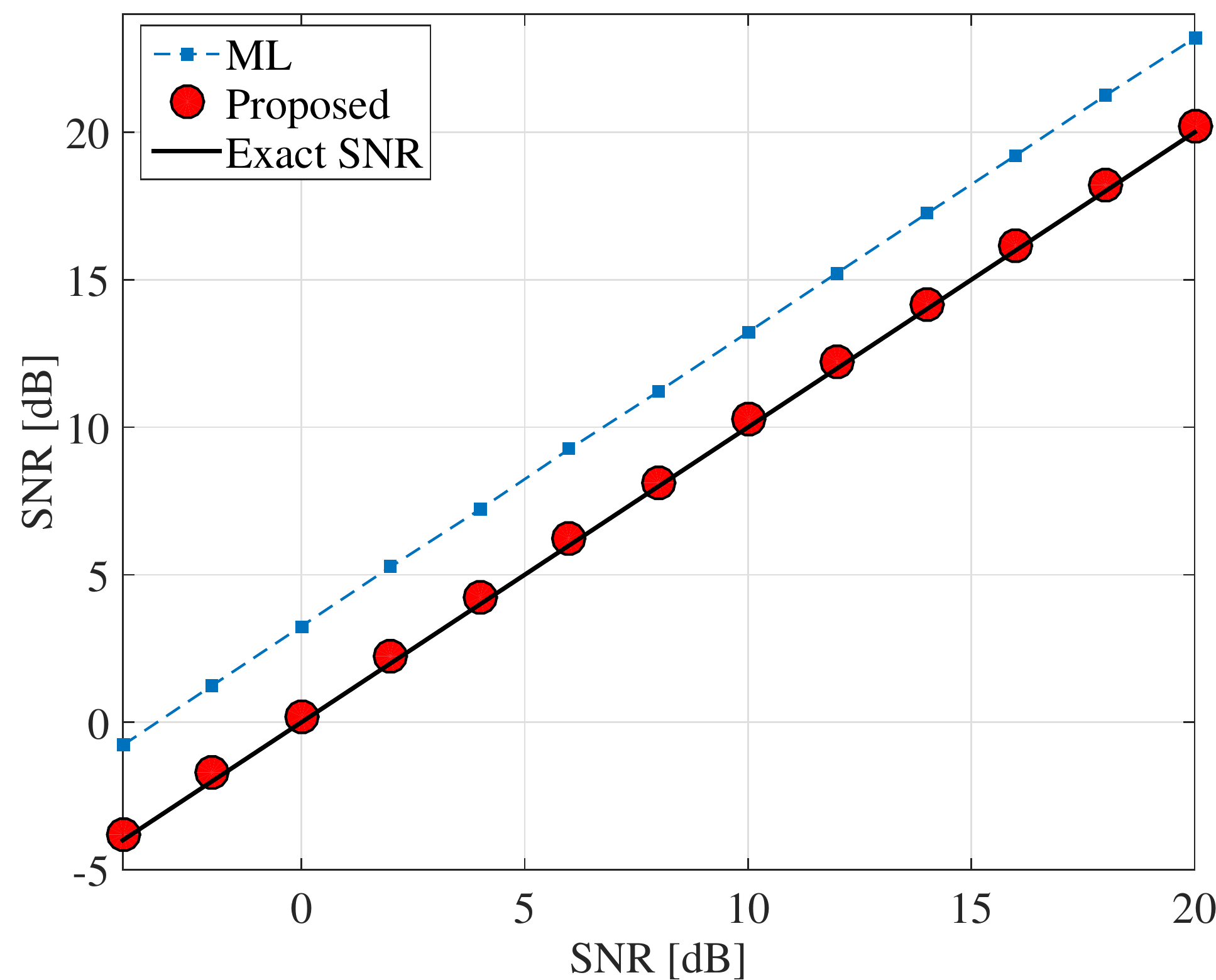}}
	\subfigure[NMSE performance of the algorithms.]{\label{fig:out5}\includegraphics[width=2.351in, height=1.7in]{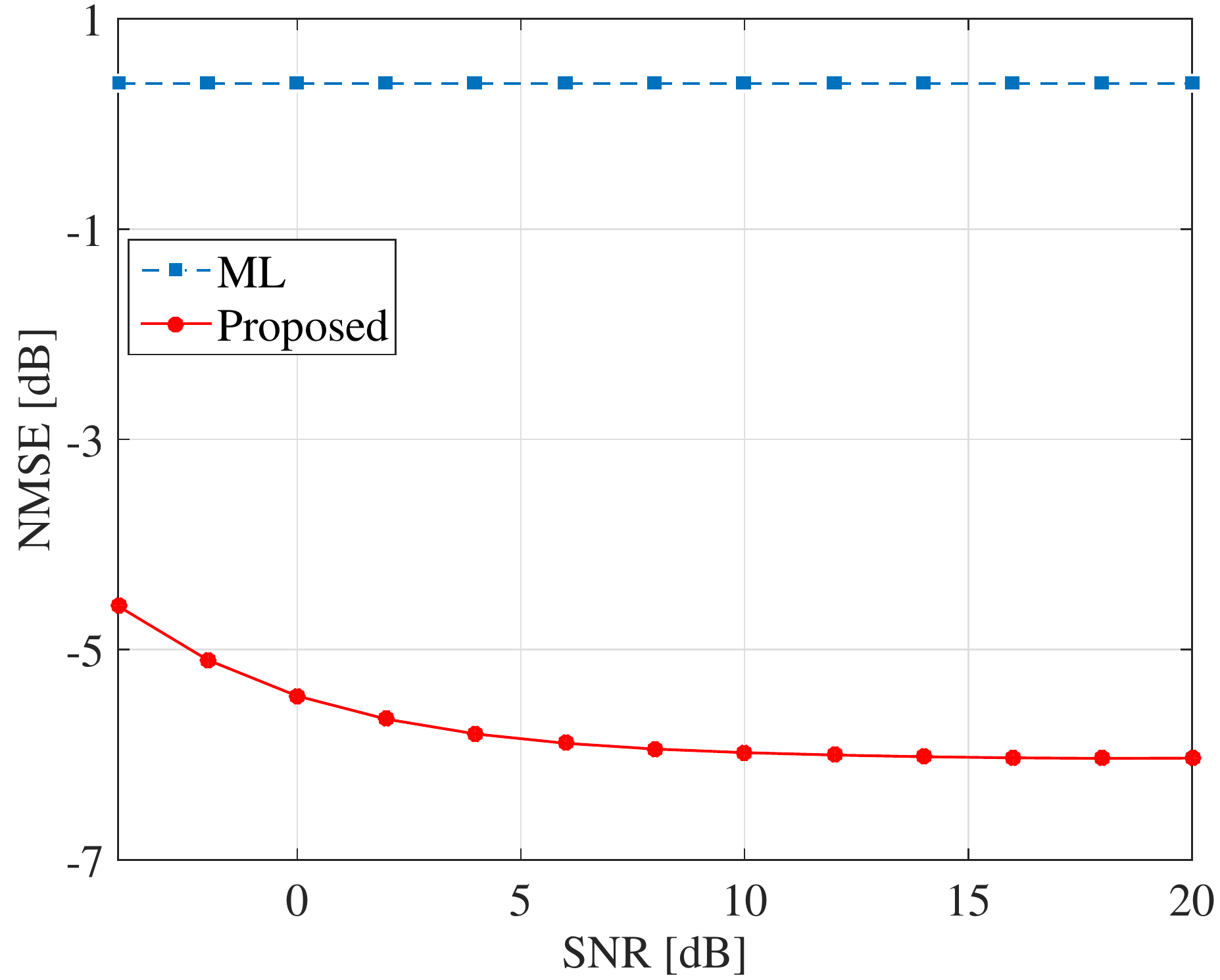}}   
			\subfigure[Example of SNR estimates obtained in 1000 trials per SNR.]{\label{fig:out6}\includegraphics[width=2.351in, height=1.7in]{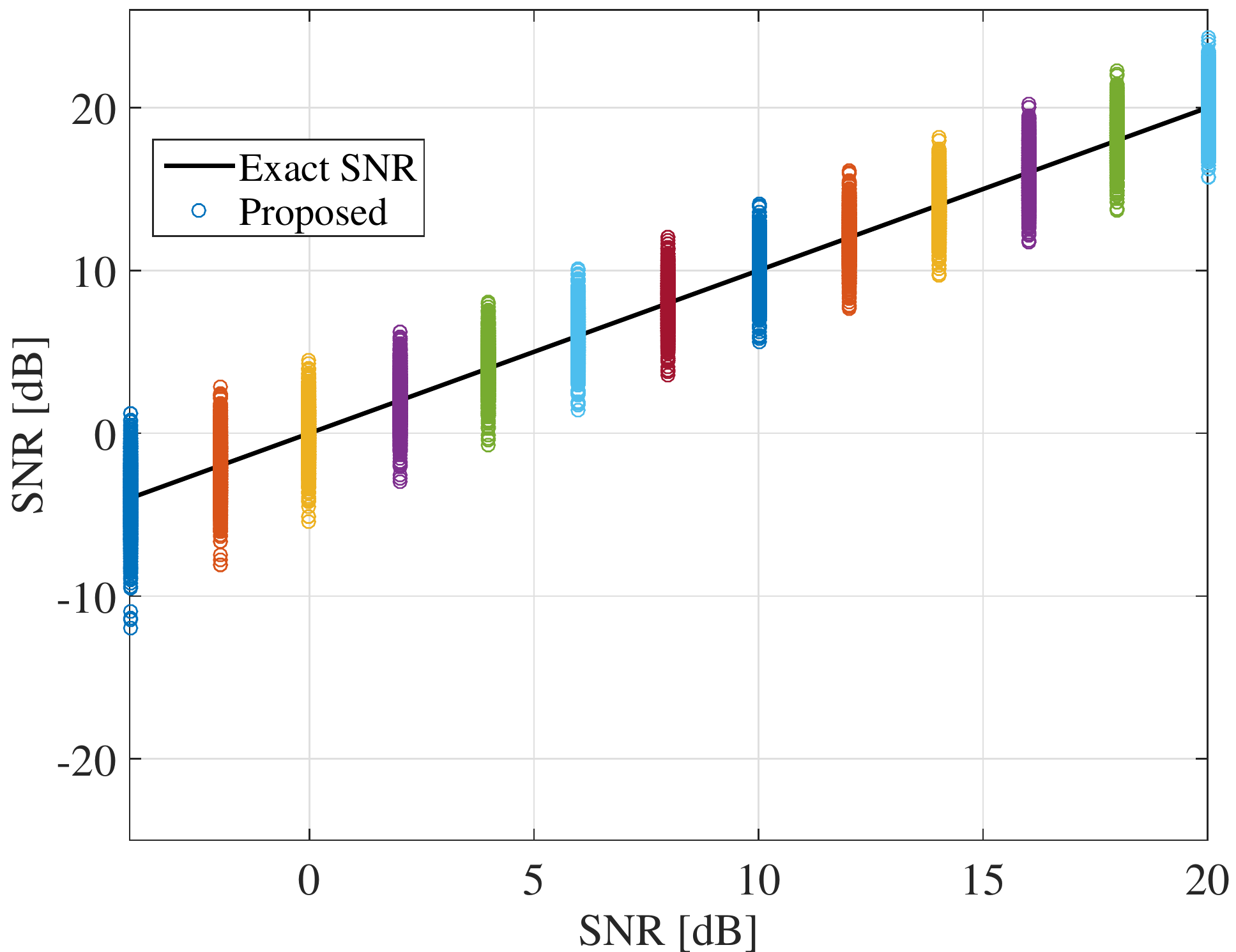}}  
	
	\subfigure[Variation of the performance with system dimensions.]{\label{fig:out7}\includegraphics[width=2.351in, height=1.7in]{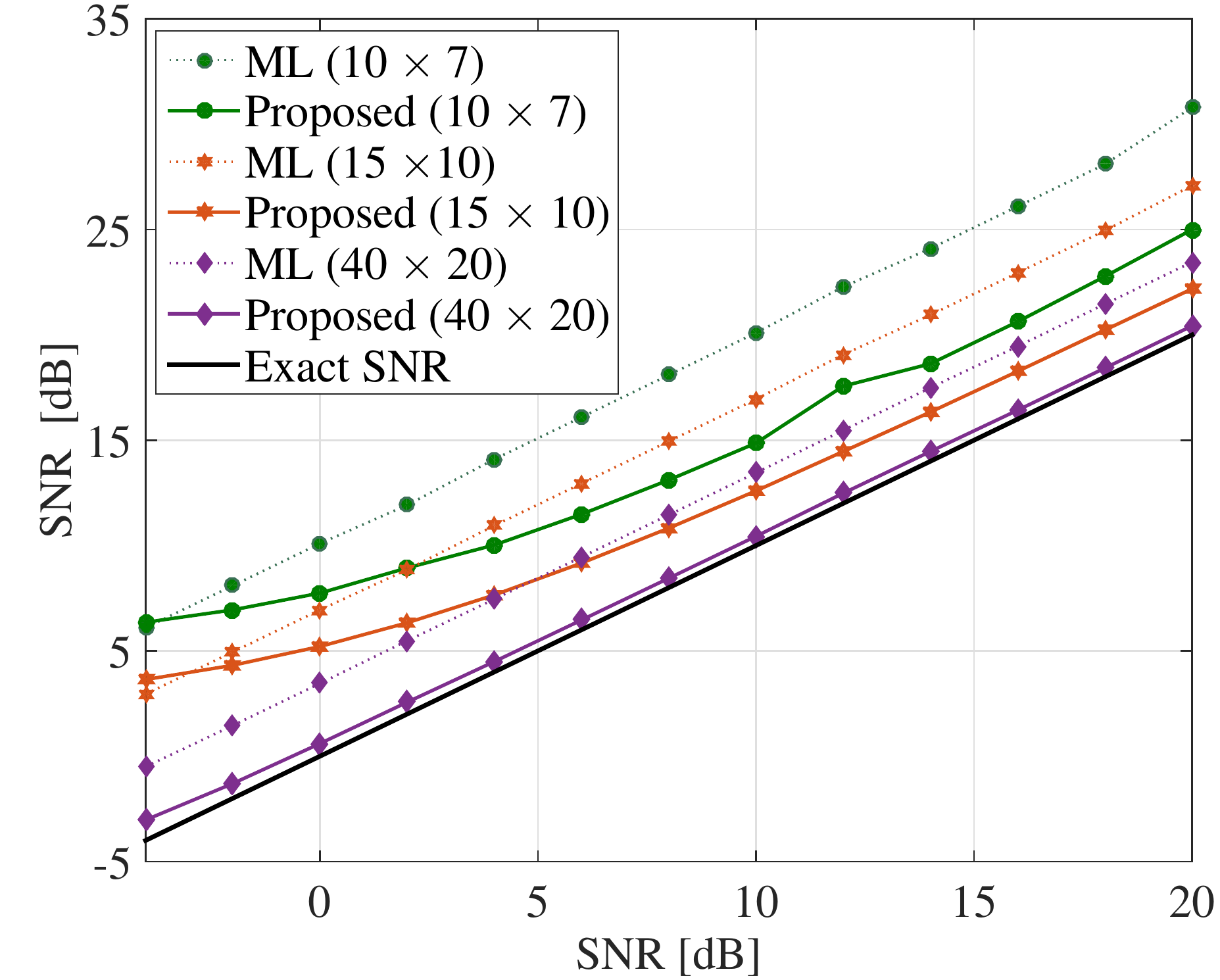}}  
		\subfigure[Variation of the performance with system dimensions.]{\label{fig:out8}\includegraphics[width=2.351in, height=1.7in]{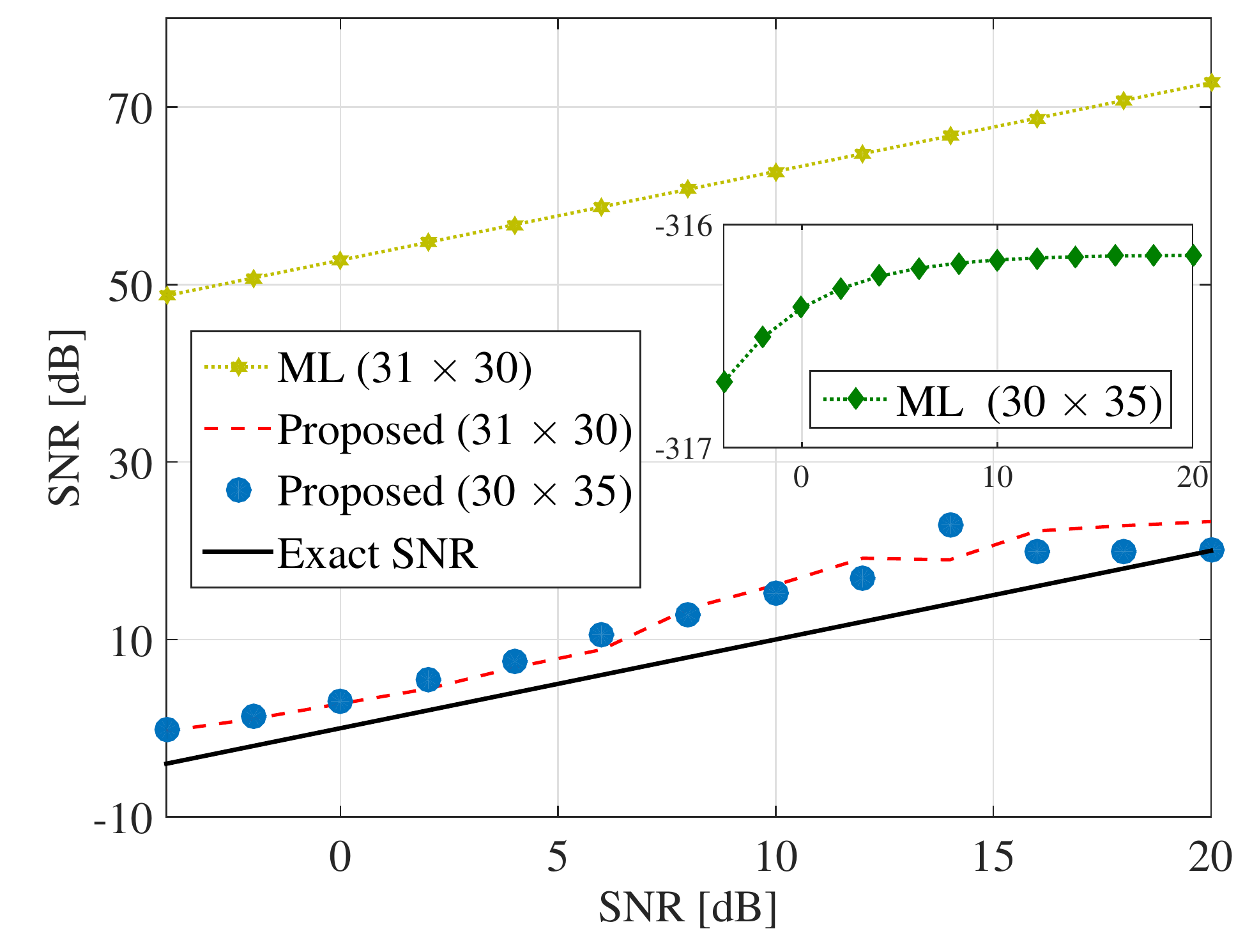}}   %
	\subfigure[Sensitivity of the performance to the choice of $\lambda$.]{\label{fig:out9}\includegraphics[width=2.351in, height=1.7in]{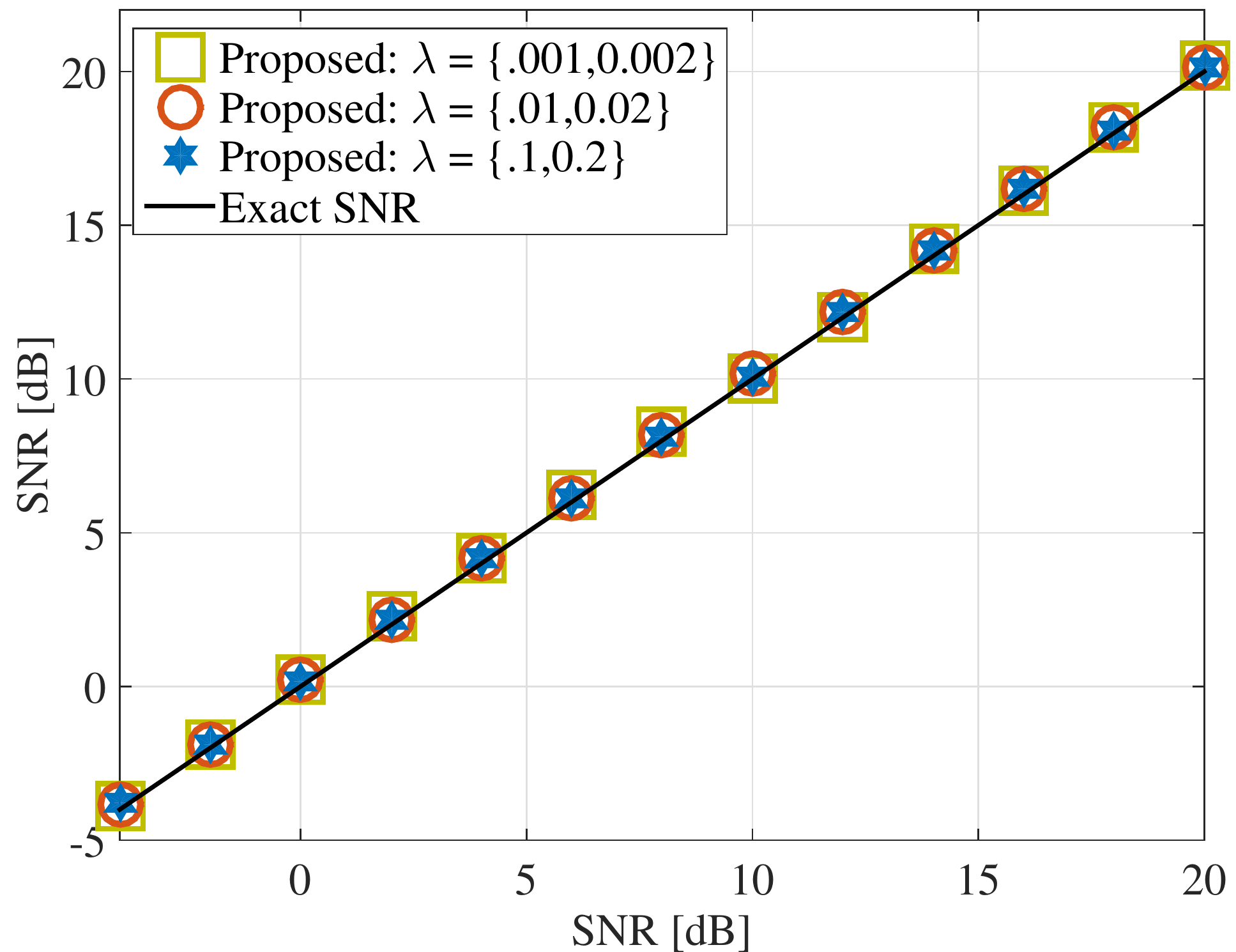}}
					
	\caption{Comparing the performance of the proposed approach in the different scenarios.}
	\label{fig:scenario 1,2,3}
\end{figure*}

After some algebraic manipulations, we can rewrite (\ref{eq:de of first term}) as 
\begin{align}  
\label{eq:eq:de of first term 2}
&\mathop{\mathbb{E}}_{\overline{\Wm}, \xv_{\text{o}}, \nv} \big[ \frac{1}{K}  ||\yv - \Psim^{\frac{1}{2}} \overline{\Wm} \hat{\xv}||_{2}^{2} \big]=\nonumber\\
&\frac{\sigma_{\xv_{\text{o}}}^{2} \lambda^{2}}{K}\ \mathop{\mathbb{E}}_{\overline{\Wm}} \Big[  \text{Tr} \left(\Wm^{T} \Wm \left(\Wm^{T}  \Wm +\lambda \Id_{\text{K}}\right)^{-2}\right)\Big] \nonumber\\
&+\frac{\sigma_{\nv}^{2}}{K} \mathop{\mathbb{E}}_{\overline{\Wm}} \Big[ M - \text{Tr}\left(\Wm^{T}\Wm\left(\Wm^{T}  \Wm +\lambda \Id_{\text{K}}\right)^{-1} \right) \nonumber\\
&- \lambda \text{Tr}\left(\Wm^{T} \Wm\left(\Wm^{T}  \Wm +\lambda \Id_{\text{K}}\right)^{-2} \right)\Big].
\end{align}
Now, by using the following equality
\begin{equation}
\label{eq:identity}
\left(\Wm^{T} \Wm + \Id_{\text{K}}\right)^{-1} \Wm = \Wm \left(\Wm \Wm^{T} +\Id_{\text{M}}\right)^{-1},
\end{equation}
we can further simplify (\ref{eq:eq:de of first term 2}) to 
\begin{align}
\label{eq:de of first term 3}
&\mathop{\mathbb{E}}_{\overline{\Wm}, \xv_{\text{o}}, \nv} \big[ \frac{1}{K} ||\yv - \Psim^{\frac{1}{2}} \overline{\Wm}\hat{\xv}||_{2}^{2} \big]= \nonumber\\
&\frac{\sigma_{\nv}^{2}}{K} \ \mathop{\mathbb{E}}_{\overline{\Wm}} \Bigg[M  - t \ \text{Tr}\left(\left(\frac{t}{K}\Wm \Wm^{T} + \Id_{\text{M}}\right)^{-1} \frac{\Wm \Wm^{T}}{K} \right) \nonumber\\
&- t \ \text{Tr}\left(\left(\frac{t}{K}\Wm \Wm^{T} + \Id_{\text{M}}\right)^{-2} \frac{\Wm \Wm^{T}}{K} \right)\Bigg] \nonumber\\
& +{\sigma_{\xv_{\text{o}}}^{2}} \ \mathop{\mathbb{E}}_{\overline{\Wm}} \Bigg[  \text{Tr} \left(\frac{\Wm \Wm^{T}}{K} \left(\frac{t}{K} \Wm \Wm^{T} + \Id_{\text{M}}\right)^{-2}\right)\Bigg].
\end{align}
Following the same procedure, we can prove that the expected value of the second term in (\ref{eq:Expectation of Cost function}) can be expressed as 
\begin{align}
\label{eq:DE term 2}
&\mathop{\mathbb{E}}_{\overline{\Wm}, \xv_{\text{o}}, \nv} \  [ \frac{\lambda} {K} \ ||\hat{\xv}||_{2}^{2}\big] =  {} \mathop{\mathbb{E}}_{\overline{\Wm}} \Bigg[  \text{Tr}\left(\left(\frac{t}{K}\Wm \Wm^{T} + \Id_{\text{M}}\right)^{-1} \frac{\Wm \Wm^{T}}{K} \right)  \nonumber\\
& - \text{Tr}\left(\left(\frac{t}{K}\Wm \Wm^{T} + \Id_{\text{M}}\right)^{-2} \frac{\Wm \Wm^{T}}{K} \right)\Bigg] \sigma_{\xv_{\text{o}}}^{2} \nonumber\\
& + \mathop{\mathbb{E}}_{\overline{\Wm}} \Bigg[ t\ \text{Tr}\left(\left(\frac{t}{K}\Wm \Wm^{T} + \Id_{\text{M}}\right)^{-2} \frac{\Wm \Wm^{T}}{K} \right)\Bigg] \frac{\sigma_{\nv}^{2}}{K}.
\end{align}
Combining (\ref{eq:de of first term 3}) and (\ref{eq:DE term 2}) yields
\begin{align}
\label{eq: final expression cost function}
&{\mathop{\mathbb{E}}}_{\overline{\Wm}, \xv_{\text{o}}, \nv} \Big[ \frac{1}{K} ||\yv -\Psim^{\frac{1}{2}} \overline{\Wm}\hat{\xv}||_{2}^{2} + \frac{\lambda}{K} \ ||\hat{\xv}||_{2}^{2} \Big]= \frac{M}{K} \sigma_{\nv}^{2}  \nonumber\\
&+ \left( K\sigma_{\xv_{\text{o}}}^{2} - t \sigma_{\nv}^{2}\right) \mathop{\mathbb{E}}_{\overline{\Wm}} \Bigg[\frac{1}{K}\text{Tr}\left(\left(\frac{t}{K}\Wm \Wm^{T} + \Id_{\text{M}}\right)^{-1} \frac{\Wm \Wm^{T}}{K} \right) \Bigg].
\end{align}
Now, based on the result obtained in \cite{4608971} (Equation (23)), and by using some algebraic manipulations, we can prove that for the second term in (\ref{eq: final expression cost function}), the following equality holds true
\begin{align}
\label{eq:DE term 2 term1}
&\mathop{\mathbb{E}}_{\overline{\Wm}} \Bigg[ \frac{1}{K} \text{Tr}\left(\left(\frac{t}{K}\Wm \Wm^{T} + \Id_{\text{M}}\right)^{-1} \frac{\Wm \Wm^{T}}{K} \right)\Bigg]= \frac{1}{K} \frac{\text{Tr}\left(\Psim \Tm\left(t\right)\right)}{\left(1+t\delta\left(t\right)\right)} \nonumber\\
&+\mathcal{O}\left(K^{-2}\right).
\end{align}
By substituting (\ref{eq:DE term 2 term1}) in (\ref{eq: final expression cost function}) and manipulating, we obtain (\ref{eq:Main result}). Now, by using (\ref{eq: final expression cost function}), (\ref{eq:DE term 2 term1}), and (\ref{eq:Main result}), we can conclude that
\begin{equation}
\label{eq:almost sure 2}
{\mathbb{E}}_{\overline{\Wm}, \xv_{\text{o}}, \nv} [\Phi \left(\overline{\Wm}\right)] - \alpha\left(t\right) \xrightarrow[]{} 0.
\end{equation}
However, as shown in \cite{4608971} (Proposition~4), and based on (\ref{eq: final expression cost function}), we have
\begin{equation}
\label{eq: variance of the term}
\text{var}\left(\frac{1}{K}\text{Tr}\left(\left(\frac{t}{K}\Wm \Wm^{T} + \Id_{\text{M}}\right)^{-1} \frac{\Wm \Wm^{T}}{K} \right)\right) = \mathcal{O}\left(K^{-2}\right).
\end{equation}
Now, by using (\ref{eq:almost sure 2}) and (\ref{eq: variance of the term}), and upon applying the Borel-Cantelli lemma \cite{klenke2013probability}, we can easily prove that 
\begin{equation}
\label{eq:almost sure 3}
{\mathbb{E}}_{\xv_{\text{o}}, \nv} [\Phi \left(\overline{\Wm}\right)] - {\mathbb{E}}_{\overline{\Wm}, \xv_{\text{o}}, \nv} [\Phi \left(\overline{\Wm}\right)] \xrightarrow[]{a.s.} 0.
\end{equation}
Finally, based on (\ref{eq:almost sure 2}) and (\ref{eq:almost sure 3}), we can conclude that $\mathop{\mathbb{E}}_{\xv_{\text{o}},\nv}[\Phi \left(\overline{\Wm}\right)]$ converges almost surely to $\alpha\left(t\right)$ as in (\ref{eq:almost sure}).
\section{Simulation Results}
\label{sec:results}
In this section, we evaluate the performance of the proposed SNR estimation method using seven different scenarios that differ in the choices of $\Psim, \xv_{\text{o}}, \nv, \lambda , M$, and $K$. In all the experiments, $\overline{\Wm}$ is generated to satisfy Assumption~\ref{A2}. In the first four scenarios, we set the matrix dimensions to be $80 \times 40$ and we set $\lambda_{i} \in \{1,2,3,4\} \times 10^{-3}$, while in the last three scenarios, we demonstrate the performance of the proposed approach with small matrix dimensions and with different choices of $\lambda$. The performance of the proposed approach is compared with the maximum likelihood (ML) estimator \cite{kay2013fundamentals}, where we use the ML estimator to estimate $\hat{\sigma}_{\nv}^{2}$ and then the SNR is obtained assuming that we know $\sigma_{\xv\text{o}}^{2}$ exactly. The results in each scenario are averaged over $10^{3}$ Monte-Carlo trials. In addition, we compute the average variance of the normalized errors (errors normalized by the true SNR) over the given SNR range and we quote it in the sub-figures captions.
\subsubsection{Scenario (a)}
\label{sub:1}
We choose $\Psim^{\frac{1}{2}} = \text{diag}\left(\psiv\right)$ with $[\psiv]_{i}$ being uniformly distributed in the interval $[0,1]$ i.e., $[\psiv]_{i} \sim \mathcal{U}\left(0,1\right)$. The noise vector $\nv$ is generated such that $[\nv]_{i}\sim \mathcal{N}(0,0.1)$. The entries of $\xv_{\text{o}}$ are set to satisfy $[\xv_{\text{o}}]_{i}\sim \mathcal{N}(0,\sigma_{\xv_{\text{o}}}^{2})$, while $\sigma_{\xv_{\text{o}}}^{2}$ is varied such that the model in (\ref{eq:model}) will have SNR values $\{-4,-2,0,\dots ,20\}$ dB.
\subsubsection{Scenario (b)}
\label{sub:2}
In this scenario, we generate $\Psim$ as
\begin{equation}
\label{eq:correlation matrix 1}
[\Psim]_{i,j} = J_{0}\left(\pi|i-j|^{2}\right),
\end{equation}
where $J_{0}\left(\cdot\right)$ is the zero-order Bessel function of the first kind. Such matrix is used to model the correlation between transmit antennas in a dense scattering environment \cite{alfano2004capacity}. We set $[\nv]_{i} \sim \mathcal{U}\left(-3,3\right)$, while $\xv_{\text{o}}$ is generated as in Scenario~(a).
\subsubsection{Scenario (c)}
\label{sub:3}
We consider the exponential correlation model for $\Psim$ which is defined as \cite{shin2006capacity} 
\begin{equation}
\label{eq:correlation matrix 2}
\Psim \left(\hat{\rho}\right) = \Big[\hat{\rho}^{|i-j|^{2}}\Big]_{i,j=1,2,\dots,M} , \ \ \hat{\rho} \in[0,1),
\end{equation}
with $\hat{\rho} =0.4$. We set $[\xv_{\text{o}}]_{i}\sim \mathcal{U}\left(-5,5\right)$ and $[\nv]_{i} \sim \mathcal{N}\left(0,\sigma_{\nv}^{2}\right)$. The value of $\sigma_{\nv}^{2}$ is varied to simulate different SNR levels. 
\subsubsection{Scenario (d)}
\label{sub:4}
We use the same model for $\Psim$ and $\nv$ as in Scenario~(c). The entries of $\xv_{\text{o}}$ are drawn from a student's t-distribution with a degree of freedom $\nu = 5$ and thus $\sigma_{\xv_{\text{o}}}^{2}=\frac{5}{3}$.

Figs.~\ref{fig:scenario 1,2,3}(a)-(d) show perfect matching between the exact SNR and the estimated SNR using Theorem \ref{th2} in all the four scenarios. It can be observed that the ML estimator tends to underestimate the noise variance in all the scenarios with different amount of error. Moreover, the plots also show that the proposed estimator is unbiased. Finally, the captions of the sub-figures show that the average variance of the proposed approach error is very small and is less than that of the ML.

In Fig.~\ref{fig:out5}, we plot the normalized mean-squared error (NMSE) (in dB) (i.e., MSE normalized by the exact SNR) of the algorithms. From Fig.~\ref{fig:out5}, we can see that the proposed approach has very low normalized error over all the SNR range while the ML exhibits very high NMSE that is above 0 dB. Further, Fig.~\ref{fig:out6} plots the SNR estimates over all the Monte-Carlo trials for Scenario~(a) and shows that the trials are concentrated around the true SNR. 
\subsubsection{Scenarios (g) and (h)} 
\label{sub:1} In these scenarios, we depict the variation of the performance of the proposed approach with matrix dimensions using the settings in Scenario~(a). From Fig.~\ref{fig:out7}, we observe that for very small dimensions, i.e., $10 \times 7$, the proposed approach offers relatively poor performance that still outperforms the ML algorithm. As the dimensions increase, performance enhances. However, even with relatively small dimensions such as $40 \times 20$, the proposed approach maintains its high accuracy with tenuous error. 

On the other hand, Fig.~\ref{fig:out8} presents the performance of two special cases: when $M = K+1$ (using $31 \times 30$) and when $M < K$ (using $30 \times 35$). Fig.~\ref{fig:out8} shows that the proposed approach maintains its high accuracy with small error while the ML exhibits very poor performance in the two cases.
\subsubsection{Scenario (i)}
In this scenario, we study the sensitivity of the proposed approach to the choice of $\lambda$. In Fig.~\ref{fig:out9}, we plot the performance of the proposed approach for three different choices of $\lambda$ using the same settings in Scenario~(b). From Fig.~\ref{fig:out9}, we can observe that for all the different choices of $\lambda$, the algorithm still provides high SNR estimation accuracy.

\section{Conclusions}
\label{sec:conc}
In this paper, we developed a new SNR estimation technique for linear systems with correlated Gaussian channel by using tools from RMT. The proposed approach is shown to provide high SNR estimation accuracy for different scenarios. Moreover, the algorithm maintains its high accuracy even for small matrix dimensions.

\bibliographystyle{IEEEbib}
\bibliography{refs}

\end{document}